\theoremstyle{plain}
\newtheorem{theorem}{Theorem}
\newtheorem{lemma}[theorem]{Lemma}
\newtheorem{proposition}[theorem]{Proposition}
\newtheorem{prop}[theorem]{Proposition}
\numberwithin{equation}{section}
\DeclareMathOperator{\ric}{Ric}
\begin{document}

\title[Nonexistence of Degenerate Horizons in Static Vacua] {Nonexistence of Degenerate Horizons in Static Vacua and Black Hole Uniqueness}

\author[Khuri]{Marcus Khuri}
\address{Department of Mathematics\\
Stony Brook University\\
Stony Brook, NY 11794, USA}
\email{khuri@math.sunysb.edu}

\author[Woolgar]{Eric Woolgar}
\address{Department of Mathematical and Statistical Sciences\\
University of Alberta\\
Edmonton, AB, Canada T6G 2G1}
\email{ewoolgar@ualberta.ca}


\begin{abstract}
We show that in any spacetime dimension $D\ge 4$, degenerate components of the event horizon do not exist in static vacuum configurations with positive cosmological constant. We also show that without a cosmological constant asymptotically flat solutions cannot possess a degenerate horizon component. Several independent proofs are presented. One proof follows easily from differential geometry in the near-horizon limit, while others use Bakry-\'Emery-Ricci bounds for static Einstein manifolds.
\end{abstract}
\maketitle

\section{Introduction}
\label{sec1}
\setcounter{equation}{0}
\setcounter{section}{1}

\noindent In the classical proof of static vacuum black hole uniqueness, the last case to be considered was that in which degenerate components of the event horizon were present. As was shown in \cite{Chrusciel,ChruscielReallTod}, such black hole configurations
cannot occur. This result applies to the 4-dimensional setting with vanishing cosmological constant $\Lambda=0$. The authors in \cite{ChruscielReallTod} also obtained certain restrictions in the higher dimensional setting and in the presence of a nonzero cosmological constant, but were ultimately unable to extend their result to these situations. The purpose of the present paper is to do just that for $\Lambda\geq 0$. The main result is as follows.

\begin{theorem}\label{theorem1}
\leavevmode
\begin{itemize}
\item [(i)] There do not exist static vacuum black holes having a degenerate horizon component in the presence of a positive cosmological constant $\Lambda> 0$.
%
\item [(ii)] A complete solution of the static vacuum equations with $\Lambda=0$ can have no more than one connected component of a degenerate horizon.
%
\item [(iii)] A solution of the static vacuum equations with $\Lambda=0$ and having an asymptotically flat end\footnote{This refers to the standard notion of asymptotic flatness within the black hole uniqueness context, see eg. \cite{GibbonsIdaShiromizu1}.}
%
%
cannot have a degenerate horizon component.
\end{itemize}

\end{theorem}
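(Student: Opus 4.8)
The plan is to reduce the claim to a two-ends splitting argument. With $\Lambda=0$ the static vacuum equations read $N\ric=\nabla^2 N$ together with $\Delta N=0$; tracing the first gives $N R_h=\Delta N=0$, so $N$ is harmonic and $(M,h)$ is scalar-flat on $\{N>0\}$. On this region I would set $f=-\log N$. A direct computation using $\ric_{ij}=\nabla_i\nabla_j\log N+\nabla_i\log N\,\nabla_j\log N$ gives $\ric+\nabla^2 f=\nabla f\otimes\nabla f\ge 0$ and, more importantly, the $1$-Bakry--\'Emery tensor
\[
\ric_f^{1}:=\ric+\nabla^2 f-\nabla f\otimes\nabla f=0,
\]
which is exactly the statement that the Riemannian ``Wick-rotated'' static metric $\tilde g=N^2\,d\theta^2+h$ on $M\times S^1$ is Ricci-flat. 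The first structural point to record is that a degenerate component cannot sit at finite distance: if $\{N=0\}$ were a smooth hypersurface at finite distance, the Hopf boundary lemma applied to the positive harmonic function $N$ would force $|\nabla N|>0$ there, contradicting degeneracy. Hence every degenerate component is a complete, asymptotically cylindrical end of $(M,h)$ along which $N\to 0$.

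The plan is then to exploit the presence of \emph{two} ends. The asymptotically flat end (where $N\to 1$) and the degenerate cylindrical end (where $N\to 0$) each furnish a ray, and together they produce a line in $(M,h)$. I would now invoke the splitting theorem for the $m$-Bakry--\'Emery--Ricci tensor in the critical case $m=1$: since $\ric_f^{1}=0\ge 0$ and $M$ contains a line, $(M,h)$ splits as a warped product $\mathbf{R}\times_\psi\Sigma^{D-2}$ with $N$ (equivalently $f$) a function of the line parameter alone. The reason to work with the finite value $m=1$ rather than the $\infty$-tensor is essential: although $f=-\log N$ is unbounded, blowing up along the degenerate end, the finite-$m$ comparison geometry needs no boundedness hypothesis on $f$, being equivalent to honest Ricci curvature on the $D$-dimensional space $\tilde g$. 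Equivalently, one may lift the line to the Ricci-flat manifold $(M\times S^1,\tilde g)$ and apply the Cheeger--Gromoll splitting there.

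It then remains to derive a contradiction from the splitting. A nontrivial product $\mathbf{R}\times_\psi\Sigma$ has at most linear volume growth transverse to a compact factor and cannot reproduce the Euclidean volume growth $r^{D-1}$ of an asymptotically flat end unless $\Sigma\cong\mathbf{R}^{D-2}$ and $(M,h)$ is globally flat $\mathbf{R}^{D-1}$; but on flat $\mathbf{R}^{D-1}$ the bounded harmonic function $N$ is constant by Liouville's theorem, contradicting $N\to 1$ at infinity and $N\to 0$ at the horizon. Either way the degenerate end cannot coexist with the asymptotically flat end, which is the assertion. This is the same mechanism underlying part (ii): two degenerate ends would likewise force a line and a splitting, and here the asymptotically flat end simply supplies the second end. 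As an independent check, one can run a flux argument: since $N$ is harmonic, $\int_{S}\partial_\nu N$ is the same constant through every hypersurface $S$ separating the two ends; it tends to a positive multiple of the ADM mass at infinity and, because $|\nabla N|\to 0$ exponentially along the cylindrical end, to $0$ at the horizon, forcing the mass to vanish and hence, by the rigidity case of the positive mass theorem for the scalar-flat metric $h$, flatness and the same contradiction.

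I expect the splitting step to be the main obstacle. Establishing that the degenerate component is a genuine complete end contributing a ray requires quantitative control of the near-horizon geometry---the asymptotically cylindrical profile and the exponential decay of $|\nabla N|$---and the splitting theorem must be applied in a weighted setting with an \emph{unbounded} weight, which is precisely why the finite-parameter object $\ric_f^{1}$, rather than the $\infty$-version, is the correct tool. A secondary technical point is the possible presence of additional, non-degenerate horizon components, which appear as compact minimal boundaries at finite distance; these can be removed from the discussion by doubling $(M,h)$ across them, since such boundaries are totally geodesic and the doubling preserves completeness and scalar-flatness, after which the two-ends argument applies verbatim. Finally one must dispose of the borderline flat product case, which the Liouville step handles cleanly.
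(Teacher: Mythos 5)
Your mechanism for part (ii) --- two ends at infinity produce a line, the quasi-Einstein equation $\ric_{\varphi}^{n+1}(g)=0$ (your $\ric_f^{1}$) gives a nonnegative finite-parameter Bakry-\'Emery-Ricci tensor, and the finite-parameter splitting theorem applies with no boundedness hypothesis on the weight --- is exactly the paper's proof of (ii), and your Hopf-lemma observation is a reasonable vacuum-specific substitute for the paper's Gaussian-null-coordinate lemma that degenerate components lie at infinite distance (though note the paper's computation needs neither the field equations nor the assumption, implicit in a Hopf argument, that a finite-distance horizon would be a smooth boundary to which $N$ extends). But there are genuine gaps. First, part (i) is never addressed: your entire setup ($N$ harmonic, $g$ scalar-flat, $N^2d\theta^2+g$ Ricci-flat) presumes $\Lambda=0$, while (i) concerns $\Lambda>0$. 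The paper disposes of (i) by integrating the near-horizon identity $F=\tfrac12|h|^2-\tfrac12\nabla_a h^a+\Lambda$ over the compact cross-section, forcing $\int_{\mathcal{H}}F>0$ and contradicting $F\le 0$ from staticity; alternatively, $\ric_{\varphi}^{n+1}=\tfrac{2\Lambda}{n-1}g>0$ together with the Bakry-\'Emery Myers theorem of Qian bounds the diameter, contradicting the infinite distance to a degenerate component. Second, you mis-state the splitting conclusion: Fang--Li--Zhang gives an \emph{isometric} product $\mathbb{E}^l\times\mathcal{N}$ with the weight $\varphi$ \emph{constant on the Euclidean factor}, not a warped product with $N$ a function of the line parameter. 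The correct statement actually makes your job easier --- $N$ constant along a line running from an end where $N\to 0$ to an end where $N\to 1$ (or another where $N\to 0$) is an immediate contradiction --- so your volume-growth/Liouville detour, which is built on the incorrect intermediate claim, should simply be deleted.

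The step that actually fails is your treatment of part (iii) when nondegenerate horizon components are present. These are boundary components, so $(M^n,g)$ is not complete and no splitting theorem applies directly; and your proposed repair by doubling destroys precisely the structure you need. Across a nondegenerate component the smooth extension of the lapse changes sign (think of the Schwarzschild slice in isotropic coordinates), so on the double the weight $\varphi=-\log|N|$ blows up on an interior hypersurface and $\ric_{\varphi}^{n+1}$ is undefined there, whereas Fang--Li--Zhang requires a weight that is smooth everywhere. The circle-lift alternative fares no better: $N^2d\theta^2+g$ completes smoothly across a nondegenerate component only if the period of $\theta$ equals $2\pi/\kappa$ for that component's surface gravity, and distinct components need not have equal surface gravities (uniqueness is what is being proved), so in general the completion has conical singularities where Cheeger--Gromoll is unavailable. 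This is exactly why the paper proves (iii) not by splitting but by a Synge-type second-variation argument: a minimizing geodesic from a level set $\mathcal{H}_r$ near the degenerate component to a large coordinate sphere $\mathcal{S}_{\mathbf{r}}$ must stay in the interior (it cannot meet a totally geodesic boundary tangentially, nor cross it and remain minimizing), and with variation fields $f(s)e_i(s)$, $f=e^{-\alpha\varphi}$, $0<\alpha<1/\sqrt{2(n+1)}$, the sum of second variations of energy becomes strictly negative as $r\to 0$ and $\mathbf{r}\to\infty$, while minimization forces it to be nonnegative. Finally, your asserted ``asymptotically cylindrical profile'' and ``exponential decay of $|\nabla N|$'' at the degenerate end are not proved (the paper establishes and uses only the weaker decay of $F$ and $\partial_s\varphi$ along curves approaching the horizon), and your flux/positive-mass ``check'' would require a positive mass theorem valid for manifolds with an additional cylindrical end and boundary in all dimensions, which is not an off-the-shelf ingredient.
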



An immediate consequence of part (iii) is a generalized version of the classical static black hole uniqueness result. In dimensions $D>4$ the uniqueness proofs \cite{GibbonsIdaShiromizu,GibbonsIdaShiromizu1,Hwang} rely on the positive mass theorem and require all horizon components to be nondegenerate. Here we have shown that degenerate components do not exist, which when combined with \cite{GibbonsIdaShiromizu,GibbonsIdaShiromizu1,Hwang} leads to the following statement.

\begin{theorem}\label{theorem2}
In any dimension $D\geq 4$, an asymptotically flat static vacuum black hole
is isometric to a Schwarzschild-Tangherlini solution.
\end{theorem}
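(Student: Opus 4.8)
The plan is to combine part (iii) of Theorem~\ref{theorem1} with the established static black hole uniqueness theorems, all of which carry the standing hypothesis that every component of the horizon be nondegenerate. The conceptual point is that Theorem~\ref{theorem1}(iii) removes precisely the one obstruction---degenerate components---that those theorems were unable to address, so the conclusion follows simply by invoking them on a configuration now known to possess only nondegenerate horizons. First I would fix the setup: let $(M,g,V)$ be an asymptotically flat static vacuum solution with $\Lambda=0$, where $V$ is the static potential and $g$ the induced metric on a spacelike slice, the horizon appearing as the level set $\{V=0\}$. By Theorem~\ref{theorem1}(iii), no connected component of $\{V=0\}$ is degenerate, i.e. the surface gravity is nonvanishing on each component.

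Next I would invoke the appropriate uniqueness result according to dimension. In $D=4$ the classical four-dimensional uniqueness theorems give that a nondegenerate asymptotically flat static vacuum black hole is Schwarzschild; combined with the degenerate case already excluded in \cite{Chrusciel,ChruscielReallTod} and now reconfirmed by part (iii), this closes the four-dimensional case. For $D>4$ I would appeal to \cite{GibbonsIdaShiromizu,GibbonsIdaShiromizu1,Hwang}: one performs the conformal doubling of the static slice across the (now nondegenerate) horizon, produces an asymptotically flat manifold of vanishing ADM mass by the conformal scalar-curvature argument, and applies the positive mass theorem to conclude that the doubled manifold is flat, whence $(M,g,V)$ is the Schwarzschild--Tangherlini solution.

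The only genuine obstacle is bookkeeping rather than new analysis: one must check that the regularity and asymptotic hypotheses under which \cite{GibbonsIdaShiromizu,GibbonsIdaShiromizu1,Hwang} operate coincide with those satisfied here, and in particular that it is exactly the nondegeneracy of each horizon component which guarantees the smooth extension of the conformally rescaled metric across the doubled horizon, so that the positive mass theorem can be applied. Since Theorem~\ref{theorem1}(iii) delivers nondegeneracy for \emph{every} component, each of these hypotheses is met and the cited arguments go through without modification, yielding the stated uniqueness.
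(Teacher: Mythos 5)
Your proposal is correct and follows essentially the same route as the paper: Theorem~\ref{theorem1}(iii) eliminates degenerate horizon components, after which the uniqueness theorems of \cite{GibbonsIdaShiromizu,GibbonsIdaShiromizu1,Hwang} (whose positive-mass-theorem arguments require nondegeneracy of every horizon component) apply directly to give Schwarzschild--Tangherlini. The paper states this as an immediate consequence in exactly this way; your additional remarks on the conformal doubling and hypothesis-matching are just an expansion of the same citation-based argument.
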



Let $(M^n,g)$ be a Riemannian manifold of dimension $n\geq 3$ on which a function $\varphi$ is defined. Consider the associated static spacetime $(\mathbb{R}\times M^n,G)$ where the spacetime metric takes the form
\begin{equation}
\label{eq1.1}
G=-e^{-2\varphi}dt^2 + g.
\end{equation}
It is assumed that the lapse function is positive on $M^n$ (hence we write it as $e^{\varphi}$), and vanishes on the topological boundary $\partial M^n=\overline{M}^n\setminus M^n$ which itself should be a compact smooth manifold. The vacuum Einstein equations
\begin{equation}
\label{eq1.2}
\ric(G)-\frac12 R(G) G +\Lambda G = 0 ,
\end{equation}
are equivalent to the following set of equations on $M^n$
\begin{equation}
\begin{split}
\label{eq1.3}
&\ric(g) +\mathrm{Hess}_g \varphi -d\varphi\otimes d\varphi = \frac{2\Lambda}{n-1} g , \\
&\Delta_g \varphi -\vert d\varphi \vert^2= \frac{2\Lambda}{n-1} .
\end{split}
\end{equation}
Note that the first equation almost implies the second. Indeed, the first equation together with the twice contracted second Bianchi identity shows that
\begin{equation}
\label{eq1.4}
\Delta_g \varphi -\vert d\varphi \vert_g^2= \frac{2\Lambda}{n-1} +Ce^{2\varphi}\ ,
\end{equation}
where $C$ is a constant. The second equation of \eqref{eq1.3} is recovered in the case that $C=0$.

Recall that a Killing horizon is a null hypersurface defined by the vanishing in norm of a Killing field $V$, which is normal to the horizon. In the static case above $V=\partial_{t}$ and the Killing horizon corresponds to $\mathbb{R}\times\partial M^n$.
Killing horizons come naturally equipped with a notion of surface gravity $\kappa$, defined through the equation
\begin{equation}
\label{eq1.5}
d|V|^2=-2\kappa V.
\end{equation}
A component of the horizon is referred to as degenerate (or extreme) if its surface gravity vanishes $\kappa=0$.

An important observation is that the static vacuum equations can be expressed in terms of the $N$-Bakry-\'Emery-Ricci tensor
\begin{equation}
\label{eq1.6}
\ric_{\varphi}^{N}(g):= \ric(g) +\mathrm{Hess}_g \varphi -\frac{1}{(N-n)}d\varphi\otimes d\varphi\ .
\end{equation}
In general $N$ may take values in the compactified real line, where the last term in \eqref{eq1.6} is removed when $N=\pm\infty$. This expression arises naturally when an $N$-dimensional metric splits as a warped product over $(M^n,g)$. Namely, the $N$-dimensional Ricci tensor splits and its projection onto the base yields $\ric_{\varphi}^N$. The term \emph{synthetic dimension} for $N$ arises since, in this context, $N$ is the dimension of the total space.
The first static vacuum equation in \eqref{eq1.3} may be rewritten as
\begin{equation}
\label{eq1.7}
\ric_{\varphi}^{n+1}(g)= \frac{2\Lambda}{(n-1)}g.
\end{equation}
Metrics which satisfy this relation are referred to as \textit{quasi-Einstein} \cite{CaseShuWei}. It turns out that many of the basic Ricci curvature results of Riemannian geometry are known to hold as well for the Bakry-\'Emery-Ricci curvature.
In particular we will exploit Bakry-\'Emery versions of Myers' Theorem, the Splitting Theorem, and arguments used in the proof of Synge's Theorem \cite{Petersen}.
It is the purpose of this paper to introduce these techniques into the study of static black hole uniqueness, and thereby establish Theorem \ref{theorem1}. More precisely, Myers' Theorem and the Splitting Theorem will yield special cases of Theorem \ref{theorem1} in Section \ref{sec3}, and in Section \ref{sec4} the Synge type methods will produce a full proof. Section \ref{sec2} is dedicated to recording technical results concerning degenerate horizons for use in later sections.
We note that the theory associated with Bakry-\'Emery-Ricci curvature has previously been applied to study static solutions of the Einstein equations which are geodesically complete, in \cite{Case,Reiris,Reiris1}.

\subsection*{Acknowledgements} M Khuri acknowledges the support of NSF Grant DMS-1708798. E Woolgar was supported by a Discovery Grant RGPIN 203614 from the Natural Sciences and Engineering Research Council. Both authors thank the Erwin Schr\"odinger International Institute for Mathematics and Physics and the organizers of its ``Geometry and Relativity'' program, where this paper was conceived. We thank Piotr Chru\'sciel for a comment on an earlier draft.

\section{Degenerate Components of the Horizon}
\label{sec2} \setcounter{equation}{0}
\setcounter{section}{2}

\subsection{Degenerate horizons as asymptotic ends}
Consider a degenerate component of the Killing horizon in a static black hole spacetime. A key prerequisite for application of the Riemannian geometric techniques mentioned in the introduction, is the fact that within the $t=0$ slice such degenerate components lie infinitely far away from any other point. This has been shown in \cite{Chrusciel}, although here we offer a simple proof using Gaussian null coordinates \cite{KunduriLucietti}. These coordinates may be introduced near a degenerate horizon, and give the following form of the spacetime metric
\begin{equation}
\label{eq2.1}
G = 2 dv \left(dr +\frac{1}{2}r^2 F(r,x) dv +rh_a(r,x) dx^a\right)
 + h_{ab}(r,x) dx^a dx^b.
\end{equation}
Here $V=\partial_{v}$ represents the timelike Killing field, $r=0$ coincides with the horizon, and $h_{ab}$ denotes the induced metric on $\mathcal{H}$ the horizon cross-section. The orbit space metric on a constant time slice $M^n$ is then given by
\begin{equation}
\label{eq2.2}
g_{ij}=G_{ij}-\frac{G_{iv}G_{jv}}{G_{vv}}=G_{ij}+\frac{G_{iv}G_{jv}}{r^2 |F|}.
\end{equation}
Note that since the Killing field is timelike away from the horizon
\begin{equation}\label{F}
F(r,x)<0\quad\quad\text{ for }\quad\quad r>0.
\end{equation}
It follows that
\begin{align}
\label{eq2.4}
\begin{split}
g(\partial_{r},\partial_{r})&=\frac{G_{rv}^2}{r^2 |F|}=\frac{1}{r^2 |F|},\\
g(\partial_r,\partial_{x^a})&=\frac{G_{rv}G_{av}}{r^2 |F|}=-\frac{rh_a}{r^2 |F|},\\
g(\partial_{x^a},\partial_{x^b})&=G_{ab}+\frac{G_{av}G_{bv}}{r^2 |F|}=h_{ab}+\frac{h_a h_b}{|F|}.
\end{split}
\end{align}
Let $\gamma(r)=(r,x^{a}(r))$, $r\in[0,r_0]$ be a smooth curve in the orbit space intersecting $\mathcal{H}$, with tangent vector $\dot{\gamma}$. On this curve $|h_a \dot{x}^a|+|F|\leq c$ independent of $r$, as these two functions are continuous on a compact interval. We then have
\begin{equation}\label{estimate}
|\dot{\gamma}(r)|^2=\frac{(1-rh_a \dot{x}^a)^2}{r^2 |F|}+h_{ab}\dot{x}^a \dot{x}^b\geq\frac{1}{2r^2 |F|}
\end{equation}
for $r_0$ sufficiently small, and hence the length of this curve diverges
\begin{equation}
\label{eq2.6}
s(r) = \int_{r}^{r_0}|\dot{\gamma}(r)|dr\geq \frac{1}{2c}\int_{r}^{r_0}\frac{dr}{r}\rightarrow\infty\quad\text{ as }\quad r\rightarrow 0\ .
\end{equation}

\begin{lemma}\label{deghorizon}
A degenerate component of the horizon cross-section is infinitely distant from any point in a constant time slice of a smooth static spacetime.
\end{lemma}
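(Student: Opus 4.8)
The plan is to prove the equivalent statement that the Riemannian distance in $(M^n,g)$ from any interior point to the horizon cross-section $\mathcal{H}$ is infinite; since this distance is the infimum of lengths of curves joining the point to $\mathcal{H}$, it suffices to show that \emph{every} such curve has infinite length. The computation culminating in \eqref{eq2.6} already yields this for curves of the graphical form $\gamma(r)=(r,x^a(r))$ that approach the horizon monotonically in the radial coordinate. The remaining point, and the one I expect to be the crux, is to remove the graphical assumption: a priori an approaching curve could oscillate in $r$ or accelerate tangentially along $\mathcal{H}$, and one must verify that such behavior cannot cancel the radial blow-up of the metric recorded in \eqref{eq2.4}.

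To handle an arbitrary curve $\sigma$, I would first localize to a Gaussian null neighborhood: fix $r_0$ small and examine the final arc of $\sigma$ on which $r<r_0$ and $r\to 0$. Writing the orbit-space metric in the completed-square form underlying \eqref{estimate}, but now for a general parametrization, one finds
\[
|\dot{\sigma}|^2=\frac{1}{|F|}\left(\frac{\dot{r}}{r}-h_a\dot{x}^a\right)^2+h_{ab}\dot{x}^a\dot{x}^b .
\]
The key step is to bound the right-hand side below by a multiple of $\dot{r}^2/r^2$ alone. Applying Cauchy--Schwarz in the form $(h_a\dot{x}^a)^2\le |h|^2\,h_{ab}\dot{x}^a\dot{x}^b$, with $|h|^2:=h^{ab}h_ah_b$, and minimizing the resulting quadratic over the tangential velocity, yields
\[
|\dot{\sigma}|^2\ge\frac{1}{|F|+|h|^2}\,\frac{\dot{r}^2}{r^2}.
\]
Because $|F|$ and $|h|^2$ are continuous, they are bounded by a constant $c$ on the compact cross-section, so $|\dot{\sigma}|\ge c^{-1/2}|\dot{r}|/r$ near the horizon.

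Integrating this inequality along the arc gives a length bounded below by $c^{-1/2}\int|\dot{r}|/r\ge c^{-1/2}\bigl|\log r_0-\log r\bigr|$, which diverges as $r\to 0$; hence any curve reaching $\mathcal{H}$ has infinite length and the lemma follows. Equivalently, the same conclusion can be packaged as a gradient estimate: a Schur-complement computation of the inverse metric from \eqref{eq2.4} gives $g^{rr}=r^2(|F|+|h|^2)$, so $|\nabla\log r|_g^2=|F|+|h|^2\le c$ and $\log r$ is $\sqrt{c}$-Lipschitz with respect to $g$; since $\log r\to-\infty$ at $\mathcal{H}$, the Lipschitz bound forces the distance to the horizon to be infinite. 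The main obstacle, as noted, is the passage from graphical to arbitrary curves, which the Cauchy--Schwarz minimization (or the equivalent inverse-metric computation) dispatches.
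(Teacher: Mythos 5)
Your proposal is correct, and it rests on the same mechanism as the paper's own proof---Gaussian null coordinates, the blow-up $g_{rr}=1/(r^2|F|)$ recorded in \eqref{eq2.4}, and a logarithmic divergence of length as $r\to 0$---but it is genuinely more complete in one respect. The paper estimates only curves of the graphical form $\gamma(r)=(r,x^a(r))$, completing the square as in \eqref{estimate} and using smallness of $r$ to absorb the cross term; since the distance to $\mathcal{H}$ is an infimum over \emph{all} connecting curves, that computation literally covers only curves monotone in $r$, and the reduction of the general case to the graphical one is left implicit in \eqref{eq2.6}. Your minimization over the tangential velocity (equivalently the Schur-complement identity $g^{rr}=r^2\left(|F|+|h|^2\right)$, which is correct) supplies exactly the missing pointwise bound $|\dot\sigma|\ge \left(|F|+|h|^2\right)^{-1/2}|\dot r|/r$, valid for arbitrary parameterizations, from which the length bound $c^{-1/2}\left|\log r_0-\log r\right|$ follows with no monotonicity assumption. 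The gradient packaging $|\nabla\log r|_g^2=|F|+|h|^2\le c$ is an especially clean way to finish: $\log r$ is uniformly Lipschitz with respect to $d_g$ and diverges at $\mathcal{H}$, so $d_g(\cdot,\mathcal{H})=\infty$. What the paper's version buys is brevity; what yours buys is that the lemma as stated---a statement about distance, hence about every competing curve---is actually proved. One small correction of wording: $|F|$ and $|h|^2$ depend on $r$ as well as $x$, so the constant $c$ comes from compactness of $[0,r_0]\times\mathcal{H}$, not of the cross-section alone; this is harmless and is the same compactness the paper invokes.
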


The derivation above used only \eqref{eq2.1} and the timelike nature of the static Killing field in the interior. The result did not require the Einstein equations to hold.

\subsection{Near-horizon geometry}
Typically the geometry of $(M^n,g)$ is asymptotically cylindrical in a neighborhood of a degenerate component of the horizon cross-section, and thus one expects appropriate decay of certain geometric quantities upon approach to $\mathcal{H}$. In order to establish the desired estimates take the near-horizon limit $v\rightarrow \frac{v}{\varepsilon}$, $r\rightarrow \varepsilon r$, and $\varepsilon\rightarrow 0$, which produces the near-horizon geometry metric
\begin{equation}
\label{eq2.7}
G_{\rm NH} = 2 dv \left(dr +\frac{1}{2}r^2 F(x) dv +rh_a(x) dx^a\right)
 + h_{ab}(x) dx^a dx^b.
\end{equation}
If $G$ satisfies the vacuum Einstein equations then the near-horizon data $(F, h_a,h_{ab})$
solve the near-horizon geometry equations on $\mathcal{H}$
\begin{equation}
\label{eq2.8}
\begin{split}
R_{ab}=&\, \frac{1}{2}h_a h_b -\nabla_{(a}h_{b)}+\Lambda h_{ab},\\
F=&\, \frac{1}{2}|h|^2-\frac{1}{2}\nabla_{a}h^a+\Lambda,
\end{split}
\end{equation}
where $R_{ab}$ denotes the Ricci tensor associate with metric $h_{ab}$. If $\Lambda\geq 0$, then integrating the second equation in \eqref{eq2.8} and using the divergence theorem yields
\begin{equation}\label{intF}
\int_{\mathcal{H}}F=\int_{\mathcal{H}}\left(\frac{1}{2}|h|^2
+\Lambda\right)\geq 0,
\end{equation}
since it is assumed that $\mathcal{H}$ is compact without boundary. In light of \eqref{F} it must be the case that $F(x)\leq 0$, and hence \eqref{intF} shows that $F(x)\equiv 0$.
In fact we immediately obtain the following nonexistence result when $\Lambda>0$ and strong restrictions when $\Lambda=0$.

\begin{prop}\label{proposition4}
In any spacetime dimension $D\geq 4$, there do not exist static vacuum black holes with $\Lambda>0$ and having a degenerate horizon component. Moreover, if $(F, h_a,h_{ab})$ is the near-horizon data of a degenerate horizon component in a static vacuum black hole
with $\Lambda=0$, then $F=h_a=0$ and $h_{ab}$ is Ricci flat.
\end{prop}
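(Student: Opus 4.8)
The plan is to extract everything from the two near-horizon geometry equations \eqref{eq2.8} together with the integral identity \eqref{intF}, rather than returning to the full spacetime metric. The discussion preceding the statement has already combined the sign restriction \eqref{F}, which forces $F(x)\leq 0$, with \eqref{intF} to conclude that $\int_{\mathcal{H}}F=0$; since $F\leq 0$ pointwise this yields $F\equiv 0$, and simultaneously forces the right-hand side $\int_{\mathcal{H}}\left(\tfrac{1}{2}|h|^2+\Lambda\right)$ to vanish. I would take this as the starting point for both cases.

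For the case $\Lambda>0$, I would simply observe that the vanishing of $\int_{\mathcal{H}}\left(\tfrac{1}{2}|h|^2+\Lambda\right)$ is impossible. On the compact cross-section $\mathcal{H}$ the integrand is bounded below by $\Lambda>0$, so the integral is at least $\Lambda\,\mathrm{Vol}(\mathcal{H})>0$. This contradiction rules out the existence of a degenerate horizon component altogether, giving the first assertion.

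For the case $\Lambda=0$, the same identity reduces to $\int_{\mathcal{H}}\tfrac{1}{2}|h|^2=0$. Since the integrand is nonnegative and continuous, this forces $h_a\equiv 0$ pointwise on $\mathcal{H}$. Substituting $h_a=0$ and $\Lambda=0$ into the first equation of \eqref{eq2.8} then gives $R_{ab}=0$, so $(\mathcal{H},h_{ab})$ is Ricci flat; combined with $F\equiv 0$, this establishes the remaining conclusions.

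The argument is short because the essential work—deriving the near-horizon equations \eqref{eq2.8} and the integrated identity \eqref{intF}—is already in place. The only step demanding a small amount of care is the passage from a vanishing integral to a pointwise conclusion, which is justified by the nonnegativity of the integrand; I do not anticipate any serious analytic obstacle beyond this.
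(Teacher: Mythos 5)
Your proposal is correct and follows essentially the same route as the paper: the text preceding the proposition already combines \eqref{F} with the integrated identity \eqref{intF} to get $F\equiv 0$ and the vanishing of $\int_{\mathcal{H}}\bigl(\tfrac12|h|^2+\Lambda\bigr)$, and the paper then draws exactly your conclusions---a contradiction when $\Lambda>0$, and $h_a\equiv 0$ with Ricci flatness of $h_{ab}$ via the first equation of \eqref{eq2.8} when $\Lambda=0$. No gaps; your care about passing from the vanishing integral to the pointwise statement matches the paper's implicit reasoning.
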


The derivation given above leading to this result made no use of Lemma \ref{deghorizon}.
Moreover, it should be pointed out that although the Schwarzschild-de Sitter black hole
may have a degenerate horizon, when this occurs the hypotheses of Proposition \ref{proposition4} are not satisfied. More precisely, in Schwarzschild-de Sitter $e^{-2\varphi}=1-2m/r-(\Lambda/3)r^2$ and the various horizons occur at the zeros of this function. In order for there to be three real roots the mass must lie in the interval $m\in\left(-\frac{1}{3\sqrt{\Lambda}},\frac{1}{3\sqrt{\Lambda}}\right)$. Degenerate horizons only occur when $m$ agrees with one of the endpoints of the interval, or equivalently two of the roots coincide. However, in both of these cases the static Killing field fails to be timelike near the horizon.

The second part of this proposition concerning the case $\Lambda=0$ has been independently proved in \cite{ChruscielReallTod}. It implies the nonexistence of such static vacuum solutions with a degenerate horizon component when $D=4$, as the Ricci flat condition is not compatible with the topological restrictions \cite{KunduriLucietti} on extreme horizons in this case.

\subsection{The vanishing of $F$}
While we have seen that the $\Lambda=0$ near-horizon equations lead us to conclude that $F$ vanishes on approach to degenerate horizons in vacuum spacetimes, the set-up of subsection 2.1 leads to a more general result for all $\Lambda\ge 0$.

To see this, consider a smooth curve $\gamma(s)=(r(s),x^{a}(s))$, parameterized by arclength, which as above connects a point interior to $M^n$ to the degenerate horizon cross-section $\mathcal{H}$. Recall that $-2\varphi=\log(-|\partial_v|^2)=\log(r^2|F|)$, and let $\dot{\gamma}=d\gamma/dr$. Then
\begin{align}
\label{eq2.10}
\begin{split}
\partial_{s}(\varphi\circ\gamma)=&(\partial_{r}\varphi) \frac{dr}{ds}+(\partial_{x^a}\varphi)\frac{dx^a}{ds}\\
=&\, -\frac{\partial_{r}\varphi}{|\dot{\gamma}|}-\frac{\dot{x}^a\partial_{x^a}\varphi}
{|\dot{\gamma}|}\\
=&\, \frac{\partial_{r}\log\left(r\sqrt{|F|}\right)}{|\dot{\gamma}|}
+\frac{\dot{x}^a \partial_{x^a}\log\left(r\sqrt{|F|}\right)}{|\dot{\gamma}|}\\
=&\, \frac{\partial_{r}\left(r\sqrt{|F|}\right)}{r\sqrt{|F|}|\dot{\gamma}|}
+\frac{r\dot{x}^a \partial_{x^a}\sqrt{|F|}}{r\sqrt{|F|}|\dot{\gamma}|}.
\end{split}
\end{align}
According to \eqref{estimate} we have $r\sqrt{|F|}|\dot{\gamma}|\geq 1/\sqrt{2}$ for small $r$, and as above $|\dot{x}^a|\leq c$. It follows that
\begin{equation}
\label{eq2.11}
|\partial_{s}(\varphi\circ\gamma)|=O(\sqrt{|F|}+r)\quad\quad\text{ as }\quad\quad r\rightarrow 0.
\end{equation}
This implies the following result.

\begin{lemma}\label{decay}
Let $(\mathbb{R}\times M^n,-e^{-2\varphi}dt^2+g)$ be a static vacuum spacetime with nonnegative cosmological constant $\Lambda\geq 0$. If $\gamma(s)$ is a smooth curve in the orbit space parameterized by arclength and connecting a point interior to $M^n$ to a degenerate horizon cross-section, then $F\circ\gamma\rightarrow 0$ and $\partial_{s}(\varphi\circ\gamma)\rightarrow 0$ as $s\rightarrow\infty$.
\end{lemma}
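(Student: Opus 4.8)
The plan is to reduce everything to the pointwise estimate \eqref{eq2.11}, which already bounds $|\partial_{s}(\varphi\circ\gamma)|$ by $O(\sqrt{|F|}+r)$; once this is in hand the lemma rests entirely on two elementary facts, namely that $r\to 0$ and that $F\circ\gamma\to 0$ as $s\to\infty$.

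First I would observe that $\gamma$ terminates on $\mathcal H$, which lies at $r=0$ and, by Lemma \ref{deghorizon}, is infinitely far from any interior point. Hence an arclength parameterization forces $s\to\infty$ exactly as the curve approaches the horizon, so in particular $r(s)\to 0$. The substantive point is the decay $F\circ\gamma\to 0$. For this I would appeal to the near-horizon analysis: in the Gaussian null expansion \eqref{eq2.1} the coefficient $F(r,x)$ is smooth up to $r=0$, and its boundary value $F(0,x)$ is exactly the near-horizon function $F(x)$ of \eqref{eq2.7}. The derivation preceding Proposition \ref{proposition4} shows, for every $\Lambda\ge 0$, that $\int_{\mathcal H}F=\int_{\mathcal H}\bigl(\tfrac12|h|^{2}+\Lambda\bigr)\ge 0$ while $F\le 0$ by \eqref{F}, which forces $F(0,x)\equiv 0$ on the compact cross-section. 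Continuity of $F$ up to the boundary together with $r(s)\to 0$ then gives $F(r(s),x(s))\to 0$; since $\mathcal H$ is compact and $F(0,\cdot)$ vanishes identically, this holds uniformly and requires no control on how $x(s)$ behaves in the limit.

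Combining the two facts, the second conclusion is immediate: \eqref{eq2.11} yields $|\partial_{s}(\varphi\circ\gamma)|=O(\sqrt{|F|}+r)\to 0$. I expect the only delicate point to be the middle step, that is, passing from the vanishing of the boundary value $F(0,x)$ to honest decay of $F$ along a curve in the full (unrescaled) orbit space. This hinges on two structural inputs from the Gaussian null construction: that $F$ extends continuously, indeed smoothly, across $r=0$, and that the scaling limit defining \eqref{eq2.7} correctly identifies $F(x)$ with $F(0,x)$. Both are standard for a smooth degenerate horizon, and they are the only places where regularity of the metric at the horizon is genuinely invoked.
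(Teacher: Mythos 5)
Your proposal is correct and takes essentially the same route as the paper: the paper's proof consists of the bound \eqref{eq2.11} derived just before the lemma, combined (implicitly) with the subsection~2.2 conclusion that the near-horizon function satisfies $F(0,\cdot)\equiv 0$ for all $\Lambda\geq 0$, continuity of $F$ up to $r=0$, and Lemma \ref{deghorizon} to identify $s\to\infty$ with $r\to 0$. The only difference is one of emphasis: you make explicit the continuity/uniformity step passing from $F(0,\cdot)\equiv 0$ to $F\circ\gamma\to 0$, which the paper leaves implicit.
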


\section{Application of Myers' Theorem and the Splitting Theorem}
\label{sec3} \setcounter{equation}{0}
\setcounter{section}{3}

\noindent In the last section, we showed how Proposition \ref{proposition4} (and thus much of Theorem \ref{theorem1}) followed easily from the second of the two equations in \eqref{eq2.8}. In this section, we show that parts (i) and (ii) of Theorem \ref{theorem1} follow easily from the first of the two equations in \eqref{eq2.8}, together with Lemma \ref{deghorizon}, by application of known results from Riemannian geometry.

A classical result in Riemannian geometry asserts that a complete manifold with Ricci curvature bounded below by a uniform positive constant must have finite diameter. This is known as Myers' Theorem \cite{Petersen}. It turns out that such a result is valid when the boundedness condition $\ric \ge c>0$ for the Ricci curvature is replaced by $\ric_{f}^{N}\ge c >0$ for the $N$-Bakry-\'Emery-Ricci tensor, for any twice differentiable $f$ and any $N>n$.

\begin{theorem}[{\cite[Theorem 5]{Qian}}]\label{Qian}
Let $(M^n,g)$ be a complete Riemannian manifold. If for some $N>n$ there exists a $C^2$ function $\varphi$ such that $\ric_{\varphi}^{N}(g)\ge \Lambda g$, $\Lambda>0$, then $(M^n,g)$ has finite diameter.
\end{theorem}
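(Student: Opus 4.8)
The plan is to prove the diameter bound by a weighted version of the classical second-variation argument for Myers' theorem, carrying the drift terms coming from $\varphi$ through a single integration by parts. By Hopf--Rinow, completeness guarantees that any two points of $M^n$ are joined by a minimizing unit-speed geodesic $\gamma:[0,L]\to M^n$, along which the index form of the length functional is nonnegative. The goal is to show $L\le\pi\sqrt{(N-1)/\Lambda}$; since the endpoints are arbitrary this bounds $\diam(M^n)$.

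First I would fix a parallel orthonormal frame $E_1,\dots,E_{n-1}$ for the normal bundle of $\gamma$ and, for a function $f$ with $f(0)=f(L)=0$, take the test variations $V_i=fE_i$. Summing the inequalities $I(V_i,V_i)\ge 0$ over $i$ collapses the sectional-curvature terms into the ordinary Ricci tensor,
\[
0\le\int_0^L\Big((n-1)\dot f^2-f^2\,\ric(\dot\gamma,\dot\gamma)\Big)\,dt .
\]
Writing $\phi=\varphi\circ\gamma$ and using the definition \eqref{eq1.6} along $\gamma$ gives $\ric(\dot\gamma,\dot\gamma)=\ric_{\varphi}^{N}(\dot\gamma,\dot\gamma)-\phi''+\frac{1}{N-n}(\phi')^2$, and the hypothesis bounds the first term below by $\Lambda$. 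Substituting and integrating the resulting term $\int_0^L f^2\phi''\,dt$ by parts (the boundary terms vanish) trades it for $-\int_0^L 2f\dot f\,\phi'\,dt$.

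The crux is then to absorb the two drift contributions. By Young's inequality,
\[
-2f\dot f\,\phi'\le (N-n)\dot f^2+\frac{1}{N-n}f^2(\phi')^2 ,
\]
and the last term exactly cancels the $-\frac{1}{N-n}f^2(\phi')^2$ already present, leaving the clean estimate $0\le\int_0^L\big((N-1)\dot f^2-\Lambda f^2\big)\,dt$. Choosing $f(t)=\sin(\pi t/L)$ forces $(N-1)\pi^2/L\ge\Lambda L$, i.e. $L\le\pi\sqrt{(N-1)/\Lambda}$, as desired. The step I expect to be delicate, and where the hypothesis $N>n$ is indispensable, is precisely this completing-of-the-square: the cross term generated by $\mathrm{Hess}_g\varphi$ can be balanced against a multiple of $\dot f^2$ only at the cost of reintroducing a multiple of $(\phi')^2$, and this reintroduced term matches the $d\varphi\otimes d\varphi$ contribution subtracted in $\ric_{\varphi}^{N}$ exactly because its coefficient is $1/(N-n)$ with $N-n>0$. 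Equivalently, one may run the weighted Laplacian comparison, where the same Cauchy--Schwarz bound $\frac{(\Delta r-\partial_r\varphi)^2}{N-1}\le\frac{(\Delta r)^2}{n-1}+\frac{(\partial_r\varphi)^2}{N-n}$ turns the Bochner identity into the Riccati inequality $\psi'\le-\psi^2/(N-1)-\Lambda$ for $\psi=\Delta_\varphi r$, whose solutions blow down to $-\infty$ before $r=\pi\sqrt{(N-1)/\Lambda}$, again yielding the diameter bound.
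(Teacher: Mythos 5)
Your proof is correct, but note that the paper itself offers no proof of this statement: it is quoted verbatim as Theorem 5 of Qian's paper \cite{Qian}, whose own derivation goes through estimates on weighted volumes (a Bishop--Gromov-type comparison for the measure $e^{-\varphi}\,d\mathrm{vol}$) rather than second variation. What your argument actually reproduces, in cleaner and more general form, is the technique the authors themselves deploy later in Section \ref{sec4}: equation \eqref{variation} is exactly your computation specialized to $N=n+1$ (the quasi-Einstein case \eqref{eq1.7}), with the same integration by parts trading $f^2\phi''$ for $-2f\dot f\phi'$ and the same absorption of the cross term. The one instructive difference is in how the square is completed. You use the sharp Young weight $N-n$, so the reintroduced $\frac{1}{N-n}f^2(\phi')^2$ cancels the drift term of \eqref{eq1.6} exactly, yielding the clean bound
\begin{equation*}
0\le\int_0^L\bigl((N-1)\dot f^2-\Lambda f^2\bigr)\,dt,
\qquad
\diam(M^n)\le\pi\sqrt{\tfrac{N-1}{\Lambda}},
\end{equation*}
which is the correct Qian-type constant. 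The paper instead absorbs with the non-sharp weight $2$ (giving $(n+1)f'^2$ and retaining $-\tfrac12 f^2(\partial_s\varphi)^2$), a deliberate choice: that leftover negative $(\partial_s\varphi)^2$ term is what drives their proof of part (iii) in the asymptotically flat setting, where $\Lambda=0$ and the contradiction must come from the lapse gradient rather than the cosmological constant. Your concluding remark about the weighted Laplacian comparison and the Riccati inequality $\psi'\le-\psi^2/(N-1)-\Lambda$ is also a valid alternative route, closer in spirit to Qian's comparison-theoretic original. So: correct proof, correctly identified crux ($N>n$ making the absorption possible), and a genuinely self-contained replacement for a result the paper only cites.
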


The static vacuum equations imply \eqref{eq1.7}, and therefore the hypotheses of this version of Myers' Theorem are satisfied when the cosmological constant is positive and nondegenerate horizons (i.e., minimal surface boundaries) are not present. It follows that the constant time slices of this solution are of finite diameter. However, if degenerate horizon components were present, this would contradict Lemma \ref{deghorizon}. This yields the next result which partially generalizes the main result of \cite{KhuriWoolgar} in the static case.

\begin{proposition}\label{proposition7}
There do not exist complete static vacuum solutions with $\Lambda>0$ and having a degenerate horizon component.
\end{proposition}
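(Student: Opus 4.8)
The plan is to derive a contradiction between the finite-diameter conclusion of the Bakry-\'Emery Myers theorem (Theorem \ref{Qian}) and the infinite-distance property of degenerate horizons established in Lemma \ref{deghorizon}. First I would recast the first static vacuum equation in its quasi-Einstein form \eqref{eq1.7}, namely $\ric_{\varphi}^{n+1}(g)=\frac{2\Lambda}{n-1}g$. Since $\Lambda>0$ and $n\geq 3$, the constant $\frac{2\Lambda}{n-1}$ is strictly positive, so taking synthetic dimension $N=n+1>n$ and using the lapse $\varphi$ as the weight function, the hypothesis $\ric_{\varphi}^{N}(g)\geq c\,g$ with $c>0$ of Theorem \ref{Qian} is verified directly.

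Next I would address completeness, which is the only delicate point. For Theorem \ref{Qian} to apply, $(M^n,g)$ must be a complete Riemannian manifold without boundary. A nondegenerate horizon component would appear as a minimal-surface boundary at finite distance, which is incompatible with geodesic completeness; hence a complete solution can carry no nondegenerate components. A degenerate component, on the other hand, is pushed off to infinite distance by Lemma \ref{deghorizon} and therefore presents itself as an asymptotic end rather than a boundary, consistent with completeness. Thus under the stated hypotheses $(M^n,g)$ is complete and boundaryless, and Theorem \ref{Qian} yields $\diam(M^n,g)<\infty$.

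The contradiction is then immediate: by Lemma \ref{deghorizon} the assumed degenerate horizon cross-section lies infinitely far from any interior point of the time slice, which is impossible once the diameter is finite. Hence no complete static vacuum solution with $\Lambda>0$ can possess a degenerate horizon component.

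I expect the main obstacle to be the correct handling of completeness: one must confirm that the degenerate horizon is genuinely an interior asymptotic end compatible with geodesic completeness (as Lemma \ref{deghorizon} guarantees), while ensuring no boundary behavior sneaks in to invalidate the hypotheses of Theorem \ref{Qian}. Once this interpretation is fixed, the argument reduces to a direct application of the two cited results.
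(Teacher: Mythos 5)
Your proposal is correct and follows essentially the same route as the paper: rewrite the static vacuum equation as the quasi-Einstein condition \eqref{eq1.7}, apply the Bakry-\'Emery Myers theorem (Theorem \ref{Qian}) to get finite diameter, and contradict Lemma \ref{deghorizon}. Your treatment of completeness --- nondegenerate components excluded as minimal-surface boundaries, degenerate components allowed as infinitely distant asymptotic ends --- matches the paper's (more tersely stated) handling of the same point.
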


Another basic result for complete Riemannian manifolds of nonnegative Ricci curvature is the Splitting Theorem of Cheeger and Gromoll \cite{Petersen}. This result states that if such a manifold admits a \textit{line}; i.e., a complete geodesic which realizes the distance between any two of its points; then it must isometrically split off a Euclidean factor. Extensions of this theorem have been established in the Bakry-\'Emery setting. For $N=\infty$ this was accomplished by Lichn\'erowicz in \cite[\S 26, pg 90]{Lichnerowicz}. The following statement treats the case of finite $N>n$.

\begin{theorem}[{\cite[Theorem 1.3]{FLZ}}]\label{splitting}
Let $(M^n,g)$ be a complete connected Riemannian manifold with a smooth function $\varphi$ and $N>n$ such that $\ric^N_{\varphi}(g)\ge 0$. If $(M^n,g)$ admits a \emph{line} then it splits isometrically as ${\mathbb E}^l\times \mathcal{N}$ with $l\ge 1$, where $\mathcal{N}$ contains no line and ${\mathbb E}^l$ is Euclidean $l$-space. Furthermore $\varphi$ is constant on ${\mathbb E}^l$, and $\mathcal{N}$ has nonnegative $(N-l)$-Bakry-\'Emery-Ricci curvature.
\end{theorem}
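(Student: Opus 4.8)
The plan is to adapt the Cheeger--Gromoll splitting argument to the weighted setting, replacing the Laplacian by the drift Laplacian $\Delta_\varphi := \Delta - \langle \nabla \varphi, \nabla \,\cdot\, \rangle$ (self-adjoint with respect to $e^{-\varphi}\,dvol_g$) and the Ricci tensor by $\ric_\varphi^N$. First I would fix a line $\gamma:\mathbb{R}\to M^n$ and form the two Busemann functions $b^{\pm}(x)=\lim_{t\to\infty}\big(d(x,\gamma(\pm t))-t\big)$ of the forward and backward rays. The analytic engine is a weighted Laplacian comparison: assuming $\ric_\varphi^N\ge 0$ with $N>n$ finite, one shows $\Delta_\varphi r\le (N-1)/r$ for the distance $r$ from a point. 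This follows from the Riccati inequality $m' \le -\tfrac{m^2}{n-1}-\ric(\partial_r,\partial_r)$ for the unweighted mean curvature $m=\Delta r$, by setting $m_\varphi:=\Delta_\varphi r = m-\partial_r\varphi$ and absorbing the drift term via the elementary bound $\tfrac{m^2}{n-1}+\tfrac{(\partial_r\varphi)^2}{N-n}\ge \tfrac{m_\varphi^2}{N-1}$; this gives $m_\varphi' \le -\tfrac{m_\varphi^2}{N-1}-\ric_\varphi^N(\partial_r,\partial_r)\le -\tfrac{m_\varphi^2}{N-1}$, whose integration yields the stated bound. Letting $t\to\infty$ then shows $b^{\pm}$ are $\Delta_\varphi$-superharmonic in the barrier sense.

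Next, the triangle inequality together with the line property gives $b^++b^-\ge 0$ everywhere and $b^++b^-\equiv 0$ along $\gamma$. Since $b^++b^-$ is $\Delta_\varphi$-superharmonic and attains its minimum value $0$ at interior points of the line, the strong minimum principle forces $b^++b^-\equiv 0$. Hence $b^+=-b^-$ is simultaneously super- and subharmonic for $\Delta_\varphi$, so it is $\Delta_\varphi$-harmonic, and elliptic regularity for the uniformly elliptic operator $\Delta_\varphi$ upgrades it to a smooth function $u:=b^+$ with $|\nabla u|\equiv 1$. I would then apply the weighted Bochner formula
\[
\tfrac12\,\Delta_\varphi|\nabla u|^2=|\mathrm{Hess}_g u|^2+\langle\nabla\Delta_\varphi u,\nabla u\rangle+\ric_\varphi^{\infty}(\nabla u,\nabla u),
\]
in which the left side and the middle term on the right vanish. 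Writing $\ric_\varphi^{\infty}=\ric_\varphi^N+\tfrac{1}{N-n}\,d\varphi\otimes d\varphi$ and using $\ric_\varphi^N\ge 0$ gives $0=|\mathrm{Hess}_g u|^2+\ric_\varphi^{\infty}(\nabla u,\nabla u)\ge |\mathrm{Hess}_g u|^2+\tfrac{1}{N-n}\langle\nabla\varphi,\nabla u\rangle^2$, forcing both terms to vanish: $\mathrm{Hess}_g u\equiv 0$ and $\langle\nabla\varphi,\nabla u\rangle\equiv 0$.

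A parallel unit vector field $\nabla u$ produces a de Rham splitting; since $u$ is globally defined and $(M^n,g)$ is complete, this is a global isometric splitting $M^n=\mathbb{R}\times\mathcal{N}^{n-1}$ with $\mathcal{N}=u^{-1}(0)$ and the $\mathbb{R}$-factor tangent to $\nabla u$, while $\langle\nabla\varphi,\nabla u\rangle\equiv 0$ says $\varphi$ is constant along $\mathbb{R}$. The dimension bookkeeping is then immediate: on a product with $\varphi$ independent of the $\mathbb{R}$-coordinate, restricting $\ric_\varphi^N$ to directions tangent to $\mathcal{N}$ yields exactly $\ric(\mathcal{N})+\mathrm{Hess}_{\mathcal{N}}\varphi-\tfrac{1}{N-n}\,d\varphi\otimes d\varphi$, and since $\dim\mathcal{N}=n-1$ and $(N-1)-(n-1)=N-n$, this equals $\ric_\varphi^{N-1}(\mathcal{N})\ge 0$. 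One then iterates: if $\mathcal{N}$ still contains a line, split again. Because each step drops the dimension by one while preserving the excess $N-\dim=N-n>0$, the process terminates after $l\ge 1$ steps with $M^n=\mathbb{E}^l\times\mathcal{N}$, where $\varphi$ is constant on $\mathbb{E}^l$, $\mathcal{N}$ contains no line, and $\ric_\varphi^{N-l}(\mathcal{N})\ge 0$.

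The hard part will be the first step: establishing the weighted Laplacian comparison with the honest $N$-dimensional constant and transferring it to the Busemann functions in the barrier sense, so that the strong minimum principle applies despite the a priori merely Lipschitz regularity of $b^{\pm}$. This is precisely where finiteness of $N$ is essential: the Cauchy--Schwarz absorption of the drift term $\partial_r\varphi$ into a genuine comparison constant $N-1$ has no analogue when $N=\infty$ (there one must separately assume $\varphi$ or $|\nabla\varphi|$ bounded, as in the Lichn\'erowicz case), and it is what makes the unconditional splitting possible here.
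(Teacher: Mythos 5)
The paper does not prove this statement at all: it is imported verbatim as Theorem 1.3 of Fang--Li--Zhang \cite{FLZ} and used as a black box, so there is no internal proof to compare against. Your proposal is correct and is essentially the argument of that cited reference (the standard Cheeger--Gromoll scheme in the weighted setting): the comparison $\Delta_\varphi r\le (N-1)/r$ via Cauchy--Schwarz absorption of the drift term, barrier-sense superharmonicity of the Busemann functions, the strong minimum principle and elliptic regularity, the weighted Bochner identity forcing $\mathrm{Hess}_g\,u\equiv 0$ and $\langle\nabla\varphi,\nabla u\rangle\equiv 0$, and the dimension-reducing iteration that preserves the excess $N-n$.
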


This may be applied to static vacuum black holes as follows.

\begin{proposition}\label{proposition9}
A complete solution of the static vacuum equations with $\Lambda\ge 0$ can have no more than one connected component of a degenerate horizon.
\end{proposition}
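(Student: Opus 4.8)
The plan is to split the argument according to the sign of $\Lambda$. When $\Lambda>0$, Proposition~\ref{proposition7} already rules out even a single degenerate horizon component in a complete static vacuum solution, so the claim holds vacuously. It therefore remains to treat $\Lambda=0$, in which case \eqref{eq1.7} becomes $\ric^{n+1}_{\varphi}(g)=0$; thus $(M^n,g)$ is a complete Riemannian manifold with $\ric^{n+1}_{\varphi}(g)\ge 0$ and synthetic dimension $N=n+1>n$, precisely the hypotheses of the Bakry-\'Emery splitting theorem.

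I argue by contradiction and suppose the degenerate horizon has two distinct connected components $\mathcal{H}_1,\mathcal{H}_2$. By Lemma~\ref{deghorizon} each lies at infinite distance from every interior point, so the neighbourhoods in which $r\to 0$ near $\mathcal{H}_1$ and near $\mathcal{H}_2$ furnish two distinct ends $E_1,E_2$ of $(M^n,g)$. The next step is the classical construction of a \emph{line} joining two ends: pick $p_k\in E_1$ and $q_k\in E_2$ exhausting the two ends, join them by minimizing unit-speed geodesics, translate the parameter so that each geodesic crosses a fixed compact set separating $E_1$ from $E_2$ at $s=0$, and pass to an Arzel\`a--Ascoli limit $\sigma:\mathbb{R}\to M^n$. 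Completeness makes $\sigma$ a complete geodesic and minimality survives the limit, so $\sigma$ is a line whose ends recede into $E_1$ as $s\to-\infty$ and into $E_2$ as $s\to+\infty$.

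Applying Theorem~\ref{splitting} to this line yields an isometric splitting $M^n=\mathbb{E}^l\times\mathcal{N}$ with $l\ge 1$, on whose Euclidean factor $\varphi$ is constant. Since the Euclidean factor is the one generated by the Busemann functions of $\sigma$, the line itself is an axis $\mathbb{E}^1\times\{\mathrm{pt}\}$ of this factor, and hence $\varphi\circ\sigma$ is constant. On the other hand, recall that $-2\varphi=\log(r^2|F|)$, i.e. $e^{-2\varphi}=r^2|F|$; as $s\to+\infty$ the line enters the horizon end $E_2$, where $r\to 0$ while $|F|$ stays bounded, so $r^2|F|\to 0$ and therefore $\varphi\circ\sigma(s)\to+\infty$. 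This is incompatible with $\varphi\circ\sigma$ being constant, and the contradiction forces the degenerate horizon to have at most one connected component.

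I expect the principal obstacle to be the line construction, and specifically the verification that the two halves of the limiting geodesic genuinely run off into the two horizon ends so that $r\to 0$ (hence $\varphi\to+\infty$) along $\sigma$. This relies on Lemma~\ref{deghorizon} to make each $\mathcal{H}_i$ an honest end at infinite distance, and on the observation that within a Gaussian-null neighbourhood of $\mathcal{H}_i$ the sets $\{r\ge\varepsilon\}$ are compact, so that leaving every compact subset of $E_i$ forces $r\to 0$. It also remains to confirm that the notion of completeness assumed in the statement is exactly that required by Theorem~\ref{splitting}, so that the splitting theorem applies verbatim. An alternative route, which sidesteps locating the line inside the Euclidean factor, is to note that a complete manifold with two ends can split only with $l=1$ and $\mathcal{N}$ compact; then $\varphi$, being constant along the $\mathbb{E}^1$ factor, would be bounded, again contradicting $\varphi\to+\infty$ at the horizon ends.
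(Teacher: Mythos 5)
Your proof is correct and takes essentially the same route as the paper: Lemma \ref{deghorizon} turns the two degenerate components into two ends, a line connecting them is produced, the Bakry-\'Emery splitting theorem (Theorem \ref{splitting}) forces $\varphi$ to be constant along that line, and this contradicts $\varphi\to+\infty$ (equivalently $|\partial_t|\to 0$) along the line as it runs into a horizon end. The only cosmetic differences are your case split at $\Lambda>0$ via Proposition \ref{proposition7} --- the paper treats all $\Lambda\ge 0$ uniformly, since $\ric_{\varphi}^{n+1}(g)=\tfrac{2\Lambda}{n-1}g\ge 0$ in either case --- and your spelling out of details (the Arzel\`a--Ascoli line construction and the location of the line inside the Euclidean factor) that the paper delegates to citations.
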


\begin{proof}
Suppose that the time slice $(M^n,g)$ has at least two degenerate horizon components. Then Lemma \ref{deghorizon} implies that it is disconnected at infinity, and hence must contain a line \cite[Lemma 41]{Petersen} connecting two horizon components.

As a solution of the static vacuum equations with $\Lambda\geq 0$, $(M^n,g)$ has nonnegative $(n+1)$-Bakry-\'Emery-Ricci curvature. The splitting theorem in \cite{FLZ} now applies to show that $M^n={\mathbb E}^l\times \mathcal{N}$ isometrically, for some $\mathcal{N}$ and $l\geq 1$. Moreover $-e^{-2\varphi}=|\partial_{t}|^2$ is constant on the Euclidean factor $\mathbb{E}^l$ which contains the line. However this is impossible since the Killing field $\partial_{t}$ is timelike on the interior of $M^n$ but null on the horizon.
\end{proof}

\section{Synge Type Arguments and the Proof of Theorem \ref{theorem1}}
\label{sec4} \setcounter{equation}{0}
\setcounter{section}{4}

\noindent In this section we will make use of second variation arguments for geodesics, reminiscent of those used in the typical proof of Synge's Theorem \cite[p 172, Theorem 26]{Petersen} from Riemannian geometry, in order to give an alternative proof of part (i) and establish part (iii) of Theorem \ref{theorem1}. Note that parts (i) and (ii) have already been proved in previous sections.

Consider a minimizing geodesic $\gamma(s)$ in $M^n$ parameterized by arclength $s\in[0,\infty)$, connecting an interior point to a degenerate component of the horizon cross-section. Let $\gamma_{\tau}(s)$, $\tau\in(-\varepsilon,\varepsilon)$ be a 1-parameter family of curves such that $\gamma_0=\gamma$, and denote the variation vector field along $\gamma$ by $X=\partial_{\tau}\gamma_{0}$.
The energy of each curve on a finite interval is defined by
\begin{equation}\label{eq4.1}
E(\tau)=\frac{1}{2}\int_{0}^{s_0}|\partial_{s}\gamma_{\tau}|^2 ds,
\end{equation}
and the second variation formula states that
\begin{equation}\label{eq4.2}
E''(0)=\int_{0}^{s_0}\left(|\nabla_{s}X|^2-\langle R(\partial_{s}\gamma,X)X,\partial_{s}\gamma\rangle\right)ds
+\left.\langle \nabla_{X}X,\partial_{s}\gamma\rangle\right\vert_{s=0}^{s_0}.
\end{equation}
Let $\{e_{i}\}_{i=1}^{n-1}$ denote an orthonormal basis for the orthogonal complement of $\partial_{s}\gamma(0)$ in $T_{\gamma(0)}M^n$, and parallel transport this basis along the geodesic to obtain variation fields $X_i=f(s)e_{i}(s)$ where $f$ is a smooth function on $[0,s_0]$. If $E_i$ denotes the energy associated with this variation, then utilizing the static vacuum equation \eqref{eq1.3} and integrating by parts produces
\begin{align}\label{variation}
\begin{split}
\sum_{i=1}^{n-1}E_{i}''(0)=&\int_{0}^{s_0}\left[(n-1)f'^2
-\mathrm{Ric}(\partial_s \gamma,\partial_s \gamma)f^2\right]ds+\left.f^2\sum_{i=1}^{n-1}\langle
\nabla_{e_i}e_i,\partial_{s}\gamma\rangle\right\vert_{s=0}^{s_0}\\
=&\int_{0}^{s_0}\left[(n-1)f'^2
-\left(\frac{2\Lambda}{n-1}-\nabla_{s}\partial_{s}\varphi+\left ( \partial_{s}\varphi\right )^2\right)f^2
\right]ds+\left.f^2\sum_{i=1}^{n-1}\langle
\nabla_{e_i}e_i,\partial_{s}\gamma\rangle\right\vert_{s=0}^{s_0}\\
=&\int_{0}^{s_0}\left[(n-1)f'^2
-\left(\frac{2\Lambda}{n-1}+\left ( \partial_{s}\varphi\right )^2\right)f^2-2ff'\partial_{s}\varphi
\right]ds\\
&+\left.f^2\left(\partial_{s}\varphi+\sum_{i=1}^{n-1}\langle
\nabla_{e_i}e_i,\partial_{s}\gamma\rangle\right)\right\vert_{s=0}^{s_0}\\
\leq &\int_{0}^{s_0}\left[(n+1)f'^2
-\left(\frac{2\Lambda}{n-1}+\frac12 \left ( \partial_{s}\varphi\right )^2\right)f^2\right]ds
+\left.f^2\left(\partial_{s}\varphi+\sum_{i=1}^{n-1}\langle
\nabla_{e_i}e_i,\partial_{s}\gamma\rangle\right)\right\vert_{s=0}^{s_0}.
\end{split}
\end{align}

The sum of second variations of energy is nonnegative if the variation vector fields vanish at the endpoints; i.e., if $f(0)=f(s_0)=0$. Thus we are motivated to minimize the integral on the right-hand side of \eqref{variation}. Consider the Rayleigh quotient
\begin{equation}\label{eq4.4}
\lambda_1=\inf_{f(0)=f(s_0)=0}\frac{\int_{0}^{s_0}\left(f'^2
-\frac{2\Lambda}{n^2-1}f^2\right)ds}{\int_{0}^{s_0}f^2 ds},
\end{equation}
which gives the principal Dirichlet eigenvalue for the operator $\frac{d^2}{ds^2}+\frac{2\Lambda}{n^2-1}$ on the interval $[0,s_0]$.
A computation shows that this value is $\lambda_1=\pi^2/s_0^2 -2\Lambda/(n^2-1)$. Since $\Lambda>0$, for a sufficiently long interval along the geodesic $\lambda_1<0$. Thus by choosing $f(s)=\sin(\pi s/s_0)$ to be the principal eigenfunction a contradiction is achieved from \eqref{variation}. This proves $(i)$ of Theorem \ref{theorem1}.

For part (iii) the setting is an asymptotically flat static vacuum solution. Assume that it has a degenerate component of the event horizon. Let $\gamma:[0,s_0]\rightarrow M^n$ be a geodesic which minimizes the distance between an $r$-level set (intersected with $M^n$) $\mathcal{H}_r\ni\gamma(0)$ in Gaussian null coordinates near the degenerate component, and a coordinate sphere $\mathcal{S}_{\mathbf{r}}\ni\gamma(s_0)$ in the asymptotically flat end. This geodesic must remain within the interior of $M^n$. To see this, observe that it cannot intersect a nondegenerate horizon component tangentially since such surfaces are totally geodesic, and it cannot intersect
these boundaries transversely since it would not be minimizing. Moreover, for similar reasons $\gamma$ must meet $\mathcal{H}_r$ and $\mathcal{S}_{\mathbf{r}}$ orthogonally. We may now follow the second variation arguments above, choosing a variation $\gamma_{\tau}^{i}(s)$ for each orthogonal variational vector field $X_i=f(s) e_i(s)$ such that $\gamma_{\tau}^{i}(0)\subset\mathcal{H}_r$ and $\gamma_{\tau}^{i}(s_0)\subset\mathcal{S}_{\mathbf{r}}$. Then $E_{i}^{''}(0)\geq 0$. Furthermore, setting $f(s)=e^{-\alpha\varphi\circ\gamma(s)}$ where $0<\alpha<1/\sqrt{2(n+1)}$ yields
\begin{equation}\label{aaa}
(n+1)f'^2
-\frac12 |\partial_{s}\varphi|^2 f^2=\left[(n+1)\alpha^2-\frac{1}{2}\right]e^{-2\alpha\varphi}|\partial_{s}\varphi|^2\leq 0.
\end{equation}
By letting $r\rightarrow 0$ and $\mathbf{r}\rightarrow\infty$ a contradiction is obtained
with \eqref{variation} as follows. The left-hand side of \eqref{variation} is nonnegative, while the integral on the right-hand side tends to a negative number in light of \eqref{aaa}, and the boundary terms converge to zero. This last fact is a consequence of the asymptotically flat fall-off which implies that
\begin{equation}\label{eq4.6}
e^{-2\alpha\varphi}=1+O(\mathbf{r}^{-1}),\quad\quad \partial_{s}\varphi=O(\mathbf{r}^{-1}),\quad\quad\sum_{i=1}^{n-1}\langle
\nabla_{e_i}e_i,\partial_{s}\gamma\rangle=O(\mathbf{r}^{-1})\text{ }\text{ }\text{ as }\text{ }\text{ }\mathbf{r}\rightarrow\infty,
\end{equation}
and the asymptotics
\begin{equation}\label{eq4.7}
e^{-2\alpha\varphi}\rightarrow 0,\quad\quad \partial_{s}\varphi\rightarrow 0,\quad\quad\sum_{i=1}^{n-1}\langle
\nabla_{e_i}e_i,\partial_{s}\gamma\rangle\rightarrow 0\text{ }\text{ }\text{ as }\text{ }\text{ }r\rightarrow 0,
\end{equation}
which result from the vanishing of $|\partial_t|$ at the horizon and Lemma \ref{decay}.
In particular, the last of these limits shows that the mean curvature of $\mathcal{H}_r$ tends to zero. This can be seen from the fact that the horizon cross-section $\mathcal{H}_0$ is a future apparent horizon, and since the $t=0$ slice is time symmetric the mean curvature agrees with the future null expansion for any $r$ so that $H=\theta_{+}\rightarrow 0$.
This completes the proof of Theorem \ref{theorem1}.(iii).

\end{document}